  \renewcommand{\headrulewidth}{0mm}
\theoremstyle{plain}
\newtheorem{theorem}{Theorem}
\newtheorem{proposition}[theorem]{Proposition}
\newtheorem{lemma}[theorem]{Lemma}
\theoremstyle{definition}
\newtheorem{example}[theorem]{Example}
\newtheorem{problem}[theorem]{Problem}
\theoremstyle{remark}
\newcommand{\numberofauthors}{}
\newcommand{\authorAfullname}{}
\newcommand{\authorAindexname}{}
\newcommand{\authorAaddress}{}
\newcommand{\authorBfullname}{}
\newcommand{\authorBindexname}{}
\newcommand{\authorBaddress}{}
\newcommand{\authorCfullname}{}
\newcommand{\authorCaddress}{}
\newcommand{\authorDfullname}{}
\newcommand{\authorDaddress}{}
\newcommand{\papertitle}{}
\newcommand{\processpaperdata}{%
  \setcounter{section}{0}
  \renewcommand{\leftmark}{%
    \ifcase\numberofauthors
      {}\or
      {\authorAfullname}\else
      {\authorAfullname{} et al.}
    \fi
  }
  \vspace*{3pt}
  \begin{center}{\bfseries\huge\papertitle}\end{center}
  \ifcase\numberofauthors
    {}\or
    {\begin{center}
      \begin{minipage}{50mm}\centering\authorAfullname\\\authorAaddress\end{minipage}
    \end{center}}\or
    {\begin{center}
      \begin{minipage}{50mm}\centering\authorAfullname\\\authorAaddress\end{minipage}
      \hspace{5mm}
      \begin{minipage}{50mm}\centering\authorBfullname\\\authorBaddress\end{minipage}
    \end{center}}\or
    {\begin{center}
      \begin{minipage}{50mm}\centering\authorAfullname\\\authorAaddress\end{minipage}
      \hspace{5mm}
      \begin{minipage}{50mm}\centering\authorBfullname\\\authorBaddress\end{minipage}
     \end{center}

     \begin{center}
      \begin{minipage}{50mm}\centering\authorCfullname\\\authorCaddress\end{minipage}
     \end{center}}\or
    {\begin{center}
      \begin{minipage}{50mm}\centering\authorAfullname\\\authorAaddress\end{minipage}
      \hspace{5mm}
      \begin{minipage}{50mm}\centering\authorBfullname\\\authorBaddress\end{minipage}
     \end{center}

     \begin{center}
      \begin{minipage}{50mm}\centering\authorCfullname\\\authorCaddress\end{minipage}
      \hspace{5mm}
      \begin{minipage}{50mm}\centering\authorDfullname\\\authorDaddress\end{minipage}
     \end{center}}\else
    {}
  \fi  
}
\renewenvironment{abstract}{\textbf{Abstract.}}{}
\begin{document}

\pagestyle{fancy}
\fancyhf{}
\fancyhead[RO,LE]{{\slshape Festschrift in Honor of Uwe Helmke}}
\fancyhead[LO,RE]{{\nouppercase{\slshape\leftmark}}}
\fancyfoot[CO,CE]{\thepage}
\renewcommand{\headrulewidth}{0mm}




\renewcommand{\numberofauthors}{2} 

\renewcommand{\authorAfullname}{J. Rosenthal}
\renewcommand{\authorAindexname}{Rosenthal, J.}
\renewcommand{\authorAaddress}{University of Zurich\\ 
Zurich, Switzerland\\ \texttt{rosenthal@math.uzh.ch}}

\renewcommand{\authorBfullname}{A.-L. Trautmann}
\renewcommand{\authorBindexname}{Trautmann, A.-L.}
\renewcommand{\authorBaddress}{University of Zurich\\ 
Zurich, Switzerland\\ \texttt{trautmann@math.uzh.ch}}

\renewcommand{\papertitle}{Decoding of Subspace Codes,\\
  a Problem of Schubert Calculus\\
  over Finite Fields}

\processpaperdata 

{





\newcommand{\rs}{\mathrm{rs}}
\newcommand{\C}{\mathbb{C}} 
\newcommand{\F}{\mathbb{F}}
\newcommand{\pn}[1]{{\mathbb P}^{#1}}
\newcommand{\G}{{\rm Grass}}
\newcommand{\Gq}{\mathrm{Grass}_q(k,n)}
\newcommand{\PG}{\mathcal{P}(q,n)}
\newcommand{\Vvs}{\mathcal{V}}
\newcommand{\Uvs}{\mathcal{U}}
\newcommand{\Rvs}{\mathcal{R}}
\newcommand{\Cvs}{\mathcal{C}}
\newcommand{\eqr}[1]{~\mbox{$(${\rm \ref{#1}}$)$}}


\begin{abstract}
  Schubert calculus provides algebraic tools to solve enumerative
  problems. There have been several applied problems in systems
  theory, linear algebra and physics which were studied by means of
  Schubert calculus.  The method is most powerful when the base field
  is algebraically closed.  In this article we first review some of
  the successes Schubert calculus had in the past. Then we show how
  the problem of decoding of subspace codes used in random network
  coding can be formulated as a problem in Schubert calculus. Since
  for this application the base field has to be assumed to be a finite
  field new techniques will have to be developed in the future.
\end{abstract}

\section{Introduction}

Hermann C\"asar Hannibal Schubert (1848-1911) is considered the
founder of enumerative geometry. He was a high school teacher in
Hamburg,  Germany.  He studied questions of the type:
Given four lines in projective three-space in general position, is there a line
intersecting all given ones. This question can then be generalized to:

\begin{problem} \label{SchPro} Given $N$ $k$-dimensional subspaces
  $\Uvs_i\subset \C^{k+m}$. Is there a subspace $\Vvs\subset \C^{k+m}$
  of complimentary dimension $m=\dim V$ such that
\begin{equation} \label{cond-2} 
\Vvs \bigcap \Uvs_i\neq \{ 0\}, \ i=1,\ldots,N.
  \end{equation}
\end{problem}

Using a symbolic calculus he then came up with the following surprising 
result~\cite{sc1886,sc1891}:

\begin{theorem} \label{Schub} 
In case the subspaces  $\Uvs_i\subset \C^{k+m}, i=1,\ldots,N$ are in general 
position and in case $N=km$ there exist exactly 
\begin{equation}                       \label{cond-1}
d(k,m)= \frac{1!2!\cdots (k-1)!(km)!}{m!(m+1)!\cdots(m+k-1)!}.
\end{equation}
different $m$ dimensional subspaces $\Vvs\subset \C^{k+m}$ which satisfy
the intersection condition\eqr{cond-2}.
\end{theorem}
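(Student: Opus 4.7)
The plan is to translate Problem \ref{SchPro} into the intersection theory of the complex Grassmannian and reduce the count to the evaluation of a single product of Schubert classes.

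First I would realize the set of candidate subspaces as the Grassmannian $X = \G(m,k+m)$, a smooth projective variety of complex dimension $km$. For each fixed $\Uvs_i$ the locus
\[
\Sigma_i = \{\Vvs \in X : \Vvs \cap \Uvs_i \neq \{0\}\}
\]
is a special Schubert variety associated with the partition $(1)$ and any complete flag refining $\Uvs_i$. It is an irreducible hypersurface, and its class in the Chow ring $A^{\ast}(X)$ is the generator $\sigma_1$, independent of the particular choice of $\Uvs_i$. This identification is a direct consequence of the definition of Schubert cells relative to a flag passing through $\Uvs_i$.

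Next I would invoke Kleiman's transversality theorem. Since $\mathrm{GL}(k+m,\C)$ acts transitively on $X$, translating the $N=km$ varieties $\Sigma_i$ by generic group elements makes their scheme-theoretic intersection transverse. Under the general position hypothesis on the $\Uvs_i$, the intersection $\Sigma_1 \cap \cdots \cap \Sigma_{km}$ is therefore zero-dimensional and reduced, so its cardinality equals the intersection number $\sigma_1^{km}$ evaluated on the fundamental class of $X$.

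It then remains to evaluate $\sigma_1^{km}$. My approach is iterated Pieri: the rule $\sigma_1 \cdot \sigma_\lambda = \sum_{\mu} \sigma_\mu$, summed over $\mu$ obtained from $\lambda$ by adding a single box inside the $k \times m$ rectangle, shows that $\sigma_1^{km}$ counts chains of partitions from the empty shape to the full $k \times m$ rectangle that grow by one box at a time. These are precisely the standard Young tableaux of rectangular shape $k \times m$, and their number is given by the Frame--Robinson--Thrall hook length formula, which for this shape simplifies to the closed form $d(k,m)$ stated in\eqr{cond-1}.

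The main obstacle is this last step: the combinatorial identification of $\sigma_1^{km}$ with standard tableaux and the hook product simplification both require careful bookkeeping, although each is classical. A secondary subtlety is verifying that ``general position'' as stated is strong enough to trigger Kleiman transversality, that is, that a Zariski-dense set of configurations $(\Uvs_1,\ldots,\Uvs_{km})$ produces a transverse zero-dimensional intersection rather than merely a nonempty one.
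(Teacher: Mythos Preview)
Your outline is correct and is, in fact, the standard modern argument: identify each incidence locus with the special Schubert class $\sigma_1\in A^1(\G(m,k+m))$, invoke Kleiman's transversality theorem to turn ``general position'' into a transverse zero-dimensional intersection, and then read off $\sigma_1^{km}$ via iterated Pieri as the number of standard Young tableaux of shape $k\times m$, which the hook length formula collapses to $d(k,m)$.

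There is nothing to compare against, however: the paper does not supply its own proof of this theorem. It presents the result as Schubert's, notes that his original derivation rested on Poncelet's ``principle of preservation of numbers'' (hence was not accepted as rigorous), and simply refers the reader to Kleiman's survey for a rigorous treatment. Your proposal is thus strictly more than what the paper offers; it is essentially the argument one finds in the modern references the paper points to. The two caveats you flag (that generic configurations indeed yield transversality, and the hook-product simplification) are genuine but routine, and you have identified them correctly.
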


Note that two-dimensional subspaces in $\C^4$ describe lines in
projective space $\mathbb{P}^3$ and Schubert hence claims in the case
of four lines in three-space in general position that there are exactly
$d(2,2)=2$ lines intersecting all four given lines.

Schubert used in the derivation of Theorem~\ref{Schub} Poncelet's
principle of preservation of numbers which was not considered a
theorem of mathematics at the time. For this reason Schubert's results
were not accepted by the mathematics community of the 19th century and
Hilbert devoted the 15th of his famous 24 problems to the question if
mathematicians can come up with rigorous techniques to prove or
disprove the claims of Dr. Schubert. A rigorous verification of
Theorem~\ref{Schub} was derived in the last century and we refer the
interested reader to the survey article~\cite{kl76} by Kleiman, where
the progress over time about Schubert calculus and the Hilbert problem
15 is described.

In the sequel we introduce the most important concepts from
Schubert calculus. 

Let $\F$ be an arbitrary field. Denote by $\G(k,n)=\G(k,\F^n)$ the
Grassmann variety consisting of all $k$-dimensional subspaces of the
vector space $\F^n$. $\G(k,n)$ can be embedded into projective space
using the Pl\"ucker embedding:
\begin{align*}
\varphi : \G(k,\F^n)  &\longrightarrow \mathbb{P}^{\binom{n}{k}-1} \\
\mathrm{span}(u_1,\ldots,u_k) &\longmapsto \F (u_1\wedge\ldots\wedge u_k).
\end{align*}

If one chooses a basis $\{e_1,\ldots,e_n\}$ of $\F^n$ and the corresponding
canonical basis of $\Lambda^k \F^n$
$$
\{ e_{i_1}\wedge \ldots \wedge 
 e_{i_k}\mid 1\leq i_{1} <\ldots <i_{k} \leq n\}
$$
then one has an induced map of the coordinates. If $U$ is a $k\times
n$ matrix whose row space $\rs (U)$ describes the subspace
$\Uvs:=\mathrm{span}(u_1,\ldots,u_k)$ and $U[i_1,\dots,i_k]$ denotes the submatrix of $U$ given by the columns $i_1,\dots,i_k$, then one readily verifies that
the Pl\"ucker embedding is given in terms of coordinates via:
$$
\rs (U) \longmapsto [\det(U[1,...,k]) : \det(U[1,...,k-1,k+1]) : ... : 
\det(U[n-k+1,...,n]).
$$
The $k\times k$ minors $\det(U[i_1,\ldots ,i_k])$ of the matrix $U$
are called the \emph{Pl\"ucker coordinates} of the subspace $\Uvs$.

The image of this embedding describes indeed a variety and the
defining equations are given by the so called ``shuffle relations''
(see e.g.~\cite{kl72,pr82}). The shuffle relations are a set of
quadratic equations in terms of the Pl\"ucker coordinates.

A flag ${\mathcal F}$ is a sequence of nested linear subspaces
$$
{\mathcal F}: \ \{ 0\}\subset V_{1}\subset V_{2} \subset\ldots
\subset V_{n}= {\F}^{n}
$$
having the property that $\dim V_{j}=j$ for $ j=1,\dots ,n$.

Denote by $\nu= ( \nu_{1},\ldots ,\nu_{k})$  an ordered index set
satisfying
$$
1\leq \nu_{1} <\ldots <\nu_{k} \leq n.
$$
For every  flag ${\mathcal F}$ 
one defines a
Schubert variety
\begin{equation}                              \label{schubvt}
\hspace{6mm}S(\nu;\mathcal{F}) := \{ W\in \G(m,\F^{n})\mid
\dim(W\bigcap V_{\nu_{i}}) \geq i \ \mbox{ for } i=1,\ldots ,k\}.
\end{equation}
The Schubert varieties are sub-varieties of the Grassmannian $\G(k,\F^n)$
and they contain a Zariski dense affine subset called 
Schubert cell and defined as:
\begin{equation}\label{schubcl}
C(\nu;\mathcal{F}):=\{W\in S(\nu; \mathcal{F})\mid \dim
(W\bigcap V_{\nu_{i}-1})=i-1 ; \mbox{ for } i=1,\ldots ,k\}.
\end{equation}

In terms of Pl\"ucker coordinates 
the defining equations of the Schubert variety  $S(\nu;\mathcal{F})$ are given
by the quadratic shuffle relations describing the Grassmann variety together
with a set of linear equations (see~\cite{kl72}). 

A fundamental question in Schubert calculus is now the following:
\begin{problem}
  Given two Schubert varieties $S(\nu;\mathcal{F})$ and
  $S(\tilde{\nu};\tilde{\mathcal{F}})$. Describe as explicitly as possible 
  the intersection variety
  $$
  S(\nu;\mathcal{F})\cap S(\tilde{\nu};\tilde{\mathcal{F})}.
  $$
\end{problem}

Schubert's Theorem~\ref{Schub} can actually also be formulated as an
intersection problem of Schubert varieties. For this note that
\begin{equation} \label{cond-3} 
\left\{  \Vvs\in\G(k,\F^{k+m})\mid  \Vvs \bigcap \Uvs_i\neq \{ 0\}\right\}
  \end{equation}
describes a Schubert variety with regard to some flag and the theorem then
states that in the intersection of $N$ Schubert varieties of above type
one finds $d(k,m)$ $m$-dimensional subspaces as solutions in general.

In the case of an algebraically closed field one has rather precise
information about this intersection variety. Topologically the
intersection variety turns out to be a union of Schubert varieties of
lower dimension and the multiplicities are governed by the
Littlewood--Richardson rule~\cite{he95}. When the field is not
algebraically closed much less is known. There has been work done over
the real numbers by Frank Sottile~\cite{so97a,so00a}. Over general
fields very little is known and we will show in this article that the
decoding of subspace codes can be viewed as a Schubert calculus
problem over some finite field. The following example illustrates
the concepts.

\begin{example}
As a base field we take $\F=\F_2=\{0,1\}$ the binary field.
Consider the Grassmannian $\G_2(2,\F^4)$ representing all lines
in projective three-space $\mathbb{P}^3$. We would like to study
Schubert's question in this situation: Given four lines in three-space,
is there always a line intersecting all four given ones. Clearly there are
many situations where the answer is affirmative, e.g. when the lines 
already intersect in some point. In general this is however not the 
case as we now demonstrate. Consider the following four lines in $\mathbb{P}^3$
represented as row spaces of the following four matrices:
$$
\left[\begin{array}{cccc}
    1 & 0 & 0 & 0 \\
    0 & 1 & 0 & 0
  \end{array}\right],
\left[\begin{array}{cccc}
    0 & 0 & 1 & 0 \\
    0 & 0 & 0 & 1
  \end{array}\right],
\left[\begin{array}{cccc}
    1 & 1 & 0 & 1 \\
    0 & 1 & 1 & 1
  \end{array}\right],
\left[\begin{array}{cccc}
    1 & 1 & 1 & 0 \\
    1 & 0 & 1 & 1
  \end{array}\right].
$$

We claim that there exists no line in projective three-space
$\mathbb{P}^3$, i.e. no two-dimensional subspace in $\G_2(2,\F^4)$
intersecting all four given subspaces non-trivially. 

$\G_2(2,\F^4)$ is embedded in $\mathbb{P}^5$ via the Pl\"ucker embedding.
Denote by 
$$
u_{i,j}:=\det U[i,j], 1\leq i<j\leq 4
$$
the Pl\"ucker coordinates of some subspace $\Uvs\in \G_2(2,\F^4)$.
The four lines impose the linear constraints:
\begin{eqnarray*}
u_{3,4}&=&0,\\
 u_{1,2}&=&0,  \\
u_{1,2}+u_{1,4}+u_{2,3}+u_{2,4}+u_{3,4}&=&0,\\
u_{1,2}+u_{1,3}+u_{1,4}+u_{2,3}+u_{3,4}&=&0.
\end{eqnarray*}
The points in $\mathbb{P}^5$ representing the image of  $\G_2(2,\F^4)$
are described by one quadratic equation (shuffle relation):
$$
u_{1,2}u_{3,4}+u_{1,3}u_{2,4}+u_{1,4}u_{2,3}=0.
$$
Solving the 5 equations in the 6 unknowns results in one
quadratic equation:
$$
(u_{1,4})^2+u_{1,4}u_{2,3}+(u_{2,3})^2=0
$$
which has no solutions over $\F_2$ in $\mathbb{P}^5$. Note that there 
are exactly $d(2,2)=2$ solutions over the algebraic closure 
as predicted by Schubert.
\end{example}

Readers who want to know more on the subject of Schubert calculus will
find material in the survey article~\cite{kl72}.

The paper is structured as follows: In Section~\ref{SchuAppl} we
present results which were derived by Schubert calculus. In
Section~\ref{NetCode} we introduce the main topic of this paper,
namely subspace codes used in random network coding. In
Section~\ref{ListDec} we show that list decoding of random network
codes is a problem of Schubert calculus over some finite field.

\section{Results in Systems Theory and Linear Algebra 
Derived by Means of Schubert Calculus}      \label{SchuAppl}

In the past Schubert calculus has been a very powerful tool
for several problem areas in the applied sciences. In this section 
we review two such problem areas and we show to what extend Schubert
calculus led to strong existence results and better understanding.

\subsection*{The pole placement problem}
One of the most prominent problems in mathematical systems theory
has been the pole placement problem. In the static situation 
the problem can be described as follows:
Consider a discrete linear system 
\begin{equation}                     \label{system}
  x(t+1)=Ax(t)+Bu(t),\ \   y(t)=Cx(t)
\end{equation}
described by matrices $A,B,C$ having size 
$n\times n$, $n\times m$ and $p\times n$ respectively. Consider 
a monic polynomial
$$
\varphi(s):=s^n+a_{n-1}s^{n-1}+\cdots+a_1s+a_0\in \F[s]
$$
of degree $n$ having coefficients in the base field $\F$. In its simplest
version the pole placement problem asks for the existence of a 
feedback law $u(t)=Ky(t)$ such that the resulting closed loop
system 
\begin{equation}
 x(t+1)=\left( A+BKC\right)x(t)
\end{equation}
has characteristic polynomial $\varphi(s)$. 

At first glance this problem looks like a problem from matrix theory
whose solution can be derived by means of linear algebra.
Surprisingly, the problem is highly nonlinear and closely related to
Schubert's Problem~\ref{SchPro}. This geometric connection was first
realized in an interesting paper by Brockett and Byrnes~\cite{br81}
who showed that over the complex numbers arbitrary pole placement is
generically possible as soon as $n\leq mp$ and in case that the
McMillan degree $n$ is equal to $mp$ then there are exactly $d(m,p)$
static feedback laws resulting in the closed loop characteristic
polynomial $\varphi(s)$.  The interested reader will find more details
in a survey article by Byrnes~\cite{by89}.

The geometric insight one gained from the Grassmannian point of view 
was also helpful for deriving pole placement results over other
base field. Over the reals the most striking result was obtained
by A.~Wang in~\cite{wa92} where it was shown that arbitrary pole
placement is possible with real compensators as soon as $n<mp$. 
Over a finite field some preliminary results were obtained by Gorla
and the first author in~\cite{go10}.

U. Helmke  in collaboration with X. Wang and the first author have been
studying the pole placement problem in the situation when 
symmetries are involved~\cite{he06}. This problem then leads 
to a Schubert type problem in the Lagrangian Grassmannian.

\subsection*{Sums of Hermitian matrices}

Given Hermitian matrices $A_{1},\ldots ,A_{r}\in\C^{n\times
n}$ each with a fixed spectrum 
\begin{equation}
\lambda_{1}(A_{l})\geq
\ldots\geq \lambda_{n}(A_{l}),\ \  l=1,\ldots ,r
\end{equation}
 and arbitrary
else. Is it possible to find then linear inequalities which
describe the possible spectrum of the Hermitian matrix 
$$
A_{r+1}:=A_{1}+\cdots +A_{r}?
$$

Questions of this type have a long history in operator theory
and linear algebra. For example H. Weyl derived in 1912 the following
famous inequality for any set of indices $1\leq i,j\leq n$ with 
$1\leq i+j-1\leq n$:
\begin{equation}
\lambda_{i+j-1}(A_1+A_2)\leq \lambda_i(A_1)+\lambda_j(A_2).
\end{equation}

In collaboration with U. Helmke the first author extended 
work by Johnson~\cite{jo79} and Thompson~\cite{th71,th89} to derive
a large set of eigenvalue inequalities. This was achieved through the 
use of Schubert calculus and we will say more in a moment.
The obtained inequalities 
included in special cases
not only the inequalities by H. Weyl but also the more extensive
inequalities from Lidskii and Freede Thompson~\cite{th71}.

In order to make the connection to Schubert calculus we 
follow~\cite{he95} and denote with $v_{1l},\ldots , v_{nl}$
the set of orthogonal eigenvectors of the Hermitian operator $A_l$,
$l=1,\ldots,r+1$. 

Using these ordered set of eigenvectors one constructs 
for  each Hermitian matrix $A_{l}$ the flag:
\begin{equation}
{\cal F}_{l}:\hspace{4mm}
\{ 0\}\subset V_{1l}\subset V_{2l}\subset\ldots \subset
V_{nl}= \C^{n}
\end{equation}
defined through the property:
\begin{equation}                \label{eig}
V_{ml}:={\rm span}(v_{1l},\ldots , v_{ml})\; \; m=1,\dots ,n.
\end{equation}

The connection to Schubert calculus is now established 
by the following result as it can be found in~\cite{he95}. 
The theorem generalizes earlier results by Freede and Thompson~\cite{th71}.

\begin{theorem}                \label{lemma2}
Let $A_{1},\ldots ,A_{r}$ be complex Hermitian $n\times n$
matrices and denote with ${\mathcal F}_{1},\ldots ,{\mathcal F}_{r+1}$
the corresponding flags of eigenspaces defined by\eqr{eig}.
Assume $A_{r+1}=A_{1}+\cdots +A_{r}$.
 and let
$\underline{i}_{l}= (i_{1l},\ldots ,i_{kl})$ be
$r+1$ sequences of integers satisfying
\begin{equation}
1\leq i_{1l} <\ldots <i_{kl} \leq n,\hspace{2mm} l=1,\ldots,r+1.
\end{equation}
Suppose the intersection of the $r+1$ Schubert subvarieties
of $\G(k,\C^{n})$ is nonempty, i.e.:
\begin{equation}                  \label{inter}
S(\underline{i}_{1};{\cal F}_{1})\bigcap\ldots \bigcap
S(\underline{i}_{r+1};{\cal F}_{r+1})\neq \emptyset.
\end{equation}
Then the following matrix eigenvalue inequalities hold:
\begin{equation}                   \label{ineq1}
\sum_{j=1}^{k}\lambda_{n-i_{j,r+1}+1}(A_{1}+\cdots +A_{r})
\geq \sum_{l=1}^{r}\sum_{j=1}^{k}\lambda_{i_{jl}}(A_{l})
\end{equation}
\begin{equation}                   \label{ineq2}
\sum_{j=1}^{k}\lambda_{i_{j,r+1}}(A_{1}+\cdots +A_{r})
\leq \sum_{l=1}^{r}\sum_{j=1}^{k}\lambda_{n-i_{jl}+1}(A_{l}).
\end{equation}
\end{theorem}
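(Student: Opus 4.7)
The plan is to pick any $W$ in the nonempty intersection of the Schubert varieties and translate each Schubert membership $W \in S(\underline{i}_l;\mathcal{F}_l)$ into an eigenvalue inequality for the compression $A_l|_W$ via the Courant--Fischer min-max characterisation. Linearity of trace, together with $A_{r+1} = A_1 + \cdots + A_r$, then assembles these local bounds into the two global inequalities \eqref{ineq1} and \eqref{ineq2}.

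The key local step is a Ky Fan--Wielandt-type lemma: if $A$ is Hermitian with eigenvalues $\lambda_1(A) \geq \cdots \geq \lambda_n(A)$ and $\mathcal{F}$ is its eigenflag as in \eqref{eig}, and if $W \in \G(k,\C^n)$ satisfies $\dim(W \cap V_{i_j}) \geq j$ for $j = 1, \ldots, k$, then the $j$-th largest eigenvalue $\mu_j(A|_W)$ of the compression obeys $\mu_j(A|_W) \geq \lambda_{i_j}(A)$. Indeed, any $j$-dimensional subspace $U \subseteq W \cap V_{i_j}$ sits inside $\mathrm{span}(v_1,\ldots,v_{i_j})$, so every $u \in U\setminus\{0\}$ has Rayleigh quotient $\langle Au,u\rangle/\|u\|^2 \geq \lambda_{i_j}(A)$, and the min-max principle delivers the claim. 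Summing over $j$ yields $\mathrm{tr}(A|_W) \geq \sum_{j=1}^k \lambda_{i_j}(A)$. Applying this to each $A_l$, $l=1,\ldots,r$, and using linearity of trace,
\[
\mathrm{tr}(A_{r+1}|_W) \;=\; \sum_{l=1}^r \mathrm{tr}(A_l|_W) \;\geq\; \sum_{l=1}^r \sum_{j=1}^k \lambda_{i_{jl}}(A_l).
\]

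To close the chain I need the complementary upper bound $\mathrm{tr}(A_{r+1}|_W) \leq \sum_{j=1}^k \lambda_{n-i_{j,r+1}+1}(A_{r+1})$. I would obtain it by running the same lemma with $A$ replaced by $-A_{r+1}$: the top eigenvectors of $-A_{r+1}$ are the bottom eigenvectors of $A_{r+1}$, and, via the standard duality between a flag and its opposite on the Grassmannian, the given Schubert membership $W \in S(\underline{i}_{r+1};\mathcal{F}_{r+1})$ passes to the corresponding opposite-flag Schubert condition with transformed indices $n-i_{j,r+1}+1$. The lemma then yields a lower bound on $\mathrm{tr}(-A_{r+1}|_W)$; flipping signs gives the desired upper bound on $\mathrm{tr}(A_{r+1}|_W)$. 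Chaining with the previous display produces \eqref{ineq1}. Inequality \eqref{ineq2} will then follow by the entirely symmetric argument, using the dual (upper) bound for each $\mathrm{tr}(A_l|_W)$ with $l \leq r$ and the original (lower) bound for $\mathrm{tr}(A_{r+1}|_W)$.

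The main obstacle will be the bookkeeping in this dual step: one must verify carefully that the opposite-flag Schubert condition carries exactly the reversed indices $n-i_{j,r+1}+1$, and that the Courant--Fischer inequality for $-A_{r+1}$ converts, after negation, into the correct upper bound on $\mathrm{tr}(A_{r+1}|_W)$. Once this index translation is pinned down the remainder of the proof reduces to two applications of the min-max principle together with the linearity of the trace.
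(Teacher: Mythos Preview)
The paper does not supply its own proof of this theorem; it merely quotes the result from~\cite{he95}. Your overall strategy---pick $W$ in the nonempty intersection, bound each $\mathrm{tr}(A_l|_W)$ via Courant--Fischer from the Schubert membership, and combine by linearity of trace---is precisely the argument used there.

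The genuine gap is in your ``dual step''. A Schubert \emph{variety} $S(\underline{i};\mathcal{F})$ is not contained in any proper Schubert variety for the opposite flag: for instance $W=\mathrm{span}(v_1,\dots,v_k)$ lies in $S(\underline{i};\mathcal{F})$ for every admissible $\underline{i}$, yet $\dim(W\cap V^{\mathrm{op}}_m)=0$ whenever $m\le n-k$. Consequently, from $W\in S(\underline{i}_{r+1};\mathcal{F}_{r+1})$ alone---with $\mathcal{F}_{r+1}$ the \emph{decreasing}-eigenvalue flag of $A_{r+1}$---you cannot extract any upper bound on $\mathrm{tr}(A_{r+1}|_W)$ beyond the trivial $\sum_j\lambda_j(A_{r+1})$, and \eqref{ineq1} does not follow. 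The ``standard duality'' you invoke applies to Schubert \emph{cells} or to cohomology classes, not to containments of varieties.

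What actually makes the argument in~\cite{he95} work is a convention the present survey leaves implicit: the eigenvectors $v_{1,r+1},\dots,v_{n,r+1}$ are listed by \emph{increasing} eigenvalue of $A_{r+1}$ (equivalently, $\mathcal{F}_{r+1}$ is the decreasing-eigenvalue flag of $-A_{r+1}$); note that the ordering $\lambda_1\ge\cdots\ge\lambda_n$ in the text is stated only for $l=1,\dots,r$. Under this reading your local lemma applies \emph{directly} to $-A_{r+1}$ with its own eigenflag---no opposite-flag passage is needed---yielding
\[
\mathrm{tr}(-A_{r+1}|_W)\;\ge\;\sum_{j=1}^{k}\lambda_{i_{j,r+1}}(-A_{r+1})\;=\;-\sum_{j=1}^{k}\lambda_{n-i_{j,r+1}+1}(A_{r+1}),
\]
which is exactly the upper bound you need. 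With this correction the remainder of your outline goes through verbatim, and \eqref{ineq2} follows by the symmetric argument you describe.
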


In 1998 Klyachko could show that the inequalities coming from Schubert
calculus as described in Theorem~\ref{lemma2} are not only necessary
but that they describe a Polytope of all possible inequalities.  The
interested reader will find Klyachko's result as well as much more in
the survey article by Fulton~\cite{fu00}.

A priori classical Schubert calculus provides very strong existence results.
It is a different matter to derive effective numerical algorithms 
to compute the subspaces which satisfy the different Schubert conditions.
For this reason Huber, Sottile  and Sturmfels~\cite{hu98} developed 
effective numerical algorithms over the reals. As we will demonstrate 
in the next sections it would be very desirable to have effective numerical
algorithms also in the case of Schubert type problems defined over some
finite field.

\section{Random Network Coding}            \label{NetCode}

In network coding one is looking at the transmission of information
through a network with possibly several senders and several receivers.
A lot of real-life applications of network coding can be found, e.g.
data streaming over the Internet, where a source wants to send the
same information to many receivers at the same time.

The network channel is represented by a directed graph with three
different types of vertices, namely \emph{sources}, i.e. vertices with
no incoming edges, \emph{sinks}, i.e. vertices with no outgoing edges,
and \emph{inner nodes}, i.e. vertices with incoming and outgoing
edges.  One assumes that at least one source and one sink exist.
Under \emph{linear} network coding the inner nodes are allowed to
forward linear combinations of the incoming information vectors. The
use of linear network coding possibly improves the transmission rate
in comparison to just forwarding information at the inner nodes
\cite{ah00}. This can be illustrated in the example of the butterfly
network:
%
%
\begin{figure}[h]
 \centering
\scalebox{0.15}{\includegraphics{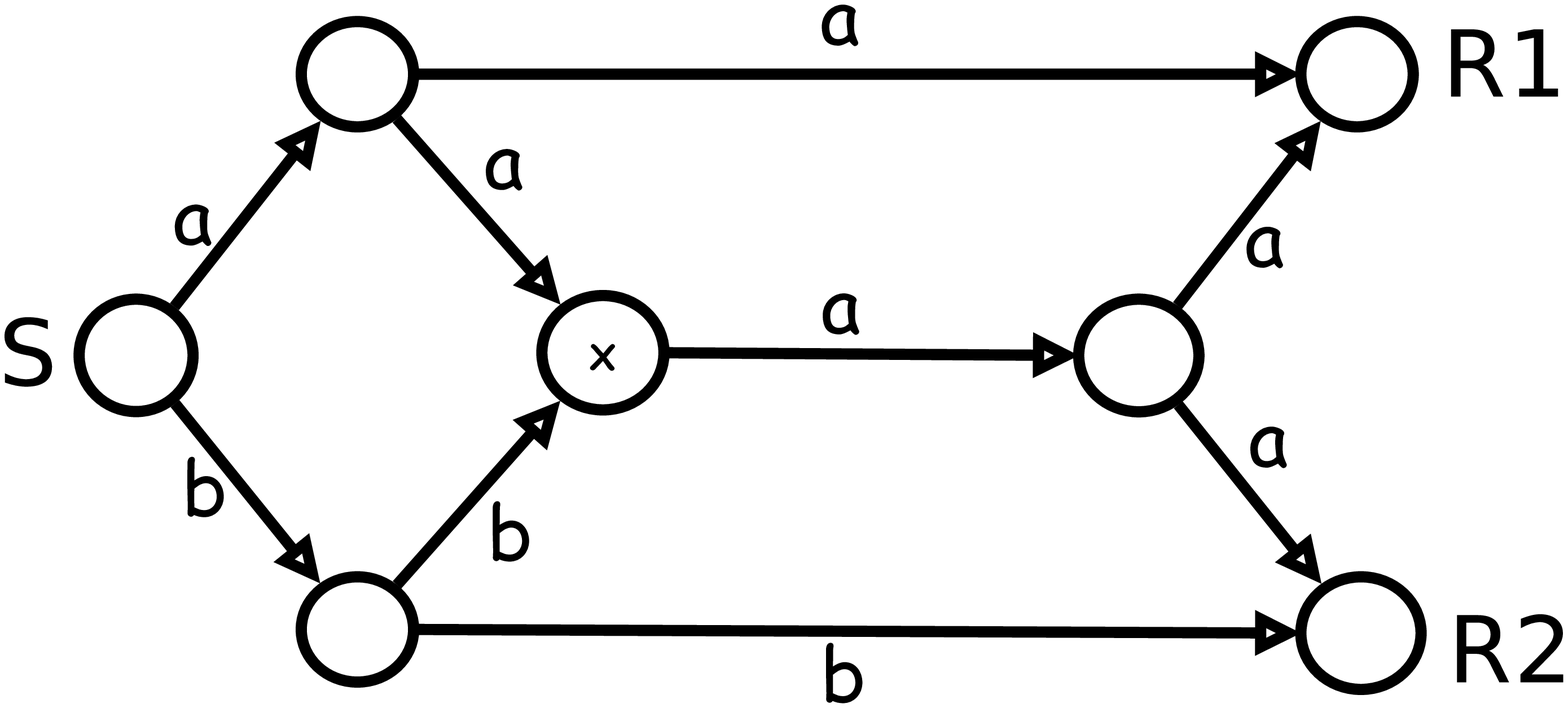}} \hspace{1cm}
\scalebox{0.15}{\includegraphics{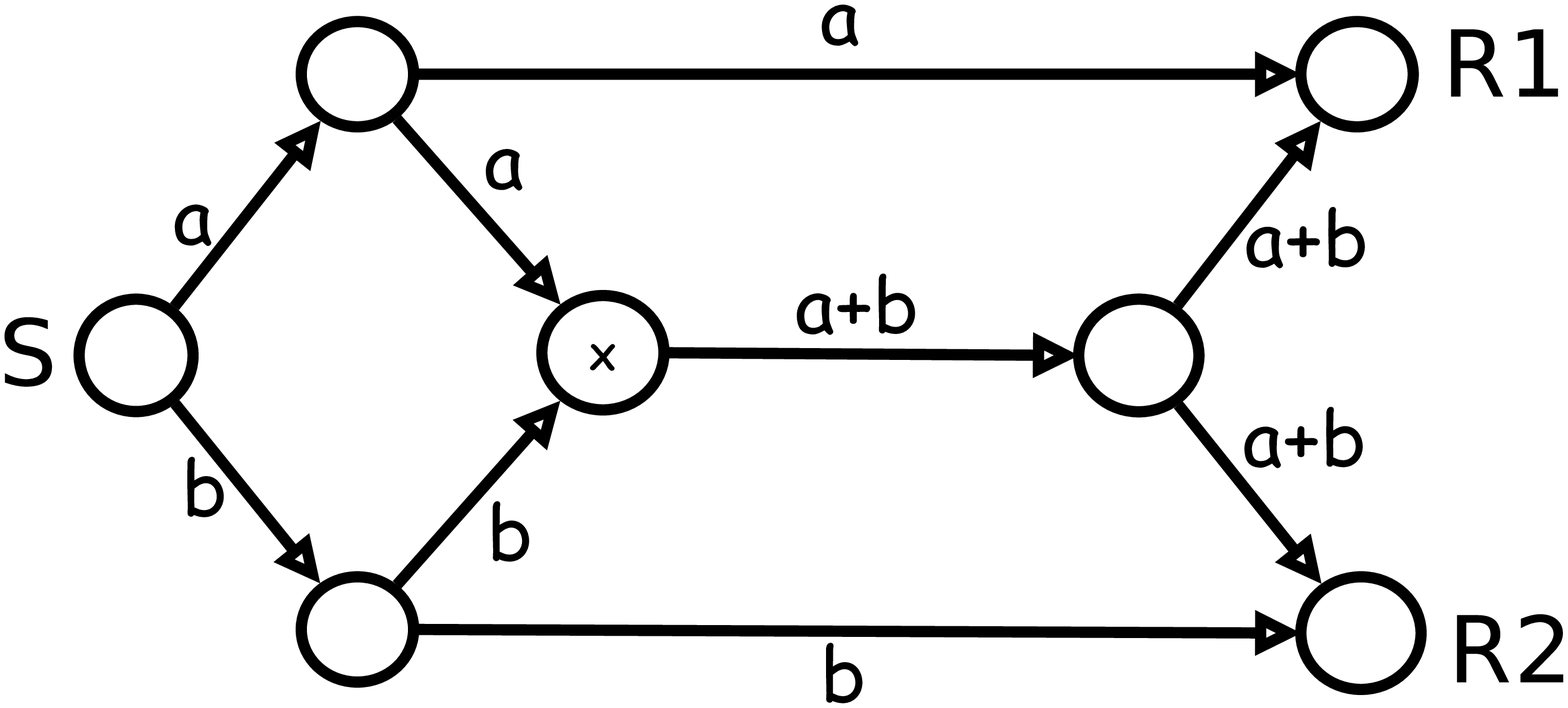}}
\caption{The butterfly network under the forwarding and the network coding model.}
\end{figure}
The source $S$ wants to send the same information, $a$ and $b$, to
both receivers $R1$ and $R2$. Under forwarding every inner node
forwards the incoming information and thus has to decide on either $a$
or $b$ (in this example on $a$) at the bottleneck vertex, marked above
by x. Thus, $R1$ does not receive $b$. With linear network coding we
allow the bottleneck vertex to send the sum of the two incoming
informations, which allows both receivers to recover both $a$ and $b$
with a simple operation.

In this linear network coding setting, when the topology of the
underlying network is unknown or time-varying, one speaks of
\emph{random} (linear) network coding. This setting was first studied
in \cite{ho03a2} and a mathematical model was introduced in
\cite{ko08}, where the authors showed that it makes sense to use
vector spaces instead of vectors over a finite field $\F_{q}$ as
codewords. In this model the source injects a basis of the respective
codeword into the network and the inner nodes forward a random linear
combination of their incoming vectors. Therefore, each sink receives a
linear combinations of the original vectors, which span the same
vector space as the sent vectors, if no errors occurred during
transmission.

In coding practice the base field is a finite field $\mathbb{F}_q$
having $q$ elements, where $q$ is a prime power. $\F_{q}^{\times} :=
\F_{q}\setminus \{0\}$ will denote the set of all invertible elements of
$\F_{q}$.  We will call the set of all subspaces of $\F_q^n$ the
projective geometry of $\F_q^n$, denoted by $\PG$, and denote the Grassmannian $\mathrm{Grass}(k,\F_q^n)$ by
$\Gq$.

There are two types of errors that may occur during transmission, a
decrease in dimension which is called an \emph{erasure} and an
increase in dimension, called an \emph{insertion}. Assume $\Uvs\in
\PG$ was sent and erasures and insertions occurred during
transmission, then the received word is of the type
\[
\Rvs = \bar{\Uvs} \oplus \mathcal{E}
\]
where $\bar{\Uvs} $ is a subspace of $\Uvs$ and 
$\mathcal{E}\in \PG$ is the error space. 
A random network coding channel in which both 
insertions and erasures can happen is called an \emph{operator channel}.

In order to have a notion of decoding capability of some code
a good metric is required on the set $\PG$:
The \emph{subspace distance}  is a metric on $\PG$
given by
\begin{align*}
  d_S(\mathcal{U},\mathcal{V}) =& \dim(\Uvs + \Vvs) - \dim(\mathcal{U}\cap
  \mathcal{V})\\
  =& \dim(\Uvs) + \dim(\Vvs) - 2\dim(\mathcal{U}\cap
  \mathcal{V})
\end{align*}
for any $\mathcal{U},\mathcal{V} \in \PG$. Another metric on 
$\PG$ is the \emph{injection distance}, defined as
\begin{align*}
  d_I(\mathcal{U},\mathcal{V}) =& 
\max\{\dim(\Uvs),\dim(\Vvs)\} - \dim(\mathcal{U}\cap
  \mathcal{V}) .
\end{align*}
Note, that for $\Uvs, \Vvs \in \Gq$ it holds that $d_{S}(\Uvs,
\Vvs)=2d_{I}(\Uvs,\Vvs)$.  A \textit{subspace code} $\mathcal{C}$ is
simply a subset of $\PG$. If $\Cvs \subseteq \Gq$, we call it a
\emph{constant dimension code}. The minimum distance of a subspace code is defined in the
usual way.

Different constructions of subspace codes have been studied, e.g. in \cite{et08u,et08p,ko08p,ko08,ma08p,ro12,si08a,tr11p}. Some facts on isometry classes and automorphisms of these codes can be found in \cite{tr12}.

 The set of all invertible $n\times n$-matrices
with entries in $\F_{q}$, called the general linear group, is denoted
by $GL_{n}$. Moreover, the set of all $k\times n$-matrices over $\F_q$
is denoted by $\F_q^{k\times n}$.

Let $U\in \F_q^{k\times n}$ be a matrix of rank $k$ and
\[
\mathcal{U}=\rs (U):= \text{row space}(U)\in \Gq.
\]
One can notice that the row space is invariant under $GL_k$-multiplication from
the left, i.e. for any $T\in GL_k$
\[
\mathcal{U}=\rs(U)= \rs(T U).
\]
Thus, there are several matrices that represent a given subspace. A
unique representative of these matrices is the one in reduced row
echelon form.
Any $k\times n$-matrix can be transformed into reduced row echelon
form by a $T\in GL_k$.

Given $U\in \F_q^{k\times n}$ of rank $k$,
$\mathcal{U}\in \Gq$ its row space and $A\in GL_n$, we
define
\[
\mathcal{U} A:=\rs(U A).
\]

Let $U,V\in \F_q^{k\times n}$ be matrices such that
$\rs(U)=\rs(V)$. Then one readily verifies that $\rs(U A)=\rs(V A)$
for any $A\in GL_n$, hence the operation is well defined.

\subsection*{Decoding subspace codes}

Given a subspace code $\Cvs \subseteq \PG$ and a received codeword
$\Rvs \in \PG$, a \emph{maximum likelihood decoder} decodes to a
codeword $\Uvs \in \Cvs$ that maximizes the probability
\[    P(\Rvs \mbox{ received} \mid \Uvs \mbox{ sent}) \]
over all $\Uvs \in \Cvs$. 

A \emph{minimum distance decoder} chooses the closest codeword to the
received word with respect to the subspace or injection distance. Let
us assume that both the erasure and the insertion probability is
less than some fixed $\epsilon$. Then over an operator channel where the
insertion probability is equal to the erasure probability, maximum
likelihood decoding is equivalent to minimum distance decoding with
respect to the subspace distance while in an adversarial model it is
equivalent to minimum distance decoding with respect to the injection
distance \cite{si09}.

Assume the minimum (injection) distance of $\Cvs$ is $d$, then if
there exists $\Uvs \in \Cvs$ with $d_{I}(\Rvs, \Uvs) \leq
\frac{d-1}{2}$, then $\Uvs$ is the unique closest codeword and the
minimum distance decoder will always decode to $\Uvs$.

Note, that a minimum subspace distance decoder is equivalent to a
minimum injection distance decoder when $\Cvs$ is a constant dimension
code. Since we will investigate constant dimension codes in the
remainder of this paper we will always use the injection distance. All
results can then be carried over to the subspace distance.

A very important concept in coding theory is the problem of {\em list 
decoding}. It is the goal of list decoding to come up with an algorithm
which allows one to compute all code words which are within some distance
of some received subspace.

For some $\Uvs \in \PG$ we denote the
ball of radius $e$ with center $\Uvs$ in $\PG$ by $B_{e}(\Uvs)$. If we
want to describe the same ball inside $\Gq$ we denote it by
$B_{e}^{k}(\Uvs)$.  Note that for a constant dimension code the ball
$B_{e}^{k}(\Uvs)$ is nothing else than some Schubert variety of $\Gq$.

Given a subspace code $\Cvs \subseteq \PG$ and a received codeword
$\Rvs \in \PG$, a \emph{list decoder with error bound e} outputs a
list of codewords $\Uvs_{1},\dots, \Uvs_{m} \in \Cvs$ whose injection
(resp. subspace) distance from $\Rvs$ is at most $e$. In other words,
the list is equal to the set
\[B_{e}(\Rvs) \cap \Cvs .\]
If $\Cvs$ is a constant dimension code, then the output of the list decoder becomes
$B^{k}_{e}(\Rvs) \cap \Cvs $.


\section{List Decoding in Pl\"ucker Coordinates}  \label{ListDec}

As already mentioned before 
 the balls of radius $t$ (with respect to the subspace
distance) around some $\Uvs \in \Gq $ forms a Schubert 
variety over a finite field. In terms of  Pl\"ucker
coordinates it is possible to give explicit equations.
 For it we need the following monomial order:
\[(i_{1},...,i_k) > (j_{1},...,j_{k}) \iff 
\exists N\in \mathbb{N}_{\geq 0} : i_{l}=j_{l} \;\forall l<N 
\textnormal{ and } i_{N} > j_{N}   .\]
It is easy to compute the balls in the following special case.

\begin{proposition}
  Define $\Uvs_{0}:=\rs [\begin{array}{cc}I_{k\times k} &0_{k\times
      n-k} \end{array}]$. Then
\begin{align*}
B^k_{t}(\Uvs_{0}) = \{&\Vvs \in \Gq  \mid \varphi(\Vvs) = 
[\mu_{1,\dots,k}: \dots : \mu_{n-k+1,\dots,n}]
, \\
&\mu_{i_1,...,i_{k}} = 0  \;\forall (i_{1},...,i_{k}) 
\not \leq  (t+1,\dots,k,n-t+1,...,n)  \}
\end{align*}
\end{proposition}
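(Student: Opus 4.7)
The plan is to identify $B_{t}^{k}(\Uvs_{0})$ with a specific Schubert variety of $\Gq$ and then apply the classical Plücker-coordinate description. Since $\Uvs_{0}$ and $\Vvs$ are both $k$-dimensional, the injection distance reduces to $d_{I}(\Uvs_{0},\Vvs) = k - \dim(\Vvs \cap \Uvs_{0})$, so $\Vvs \in B_{t}^{k}(\Uvs_{0})$ is equivalent to $\dim(\Vvs \cap \Uvs_{0}) \geq k-t$. With $\mathcal{F}$ the standard flag $V_{j} = \mathrm{span}(e_{1},\ldots,e_{j})$ we have $\Uvs_{0}=V_{k}$, and I would argue that this single intersection condition cuts out exactly the Schubert variety $S(\nu;\mathcal{F})$ for $\nu = (t+1,\ldots,k,n-t+1,\ldots,n)$. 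The identification is routine: the Schubert condition at $l = k-t$ is precisely $\dim(\Vvs \cap V_{k}) \geq k-t$; for $l < k-t$ it follows by dimension counting inside $V_{k}$; and for $l > k-t$ it is automatic from the ambient dimension count.

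With the ball recognized as a Schubert variety, the next step is to rewrite the Schubert condition in Plücker coordinates, exploiting the equivalence $\mu_{I}(\Vvs)=0 \iff \Vvs \cap W_{I^{c}} \neq \{0\}$, where $W_{I^{c}} := \mathrm{span}(e_{j} : j \notin I)$. For the forward direction, given $I = (i_{1},\ldots,i_{k})$ with $i_{m} > \nu_{m}$ for some $m$, at most $m-1$ of the $i_{l}$ fall below $\nu_{m}$, so $\dim(W_{I^{c}} \cap V_{\nu_{m}}) \geq \nu_{m}-m+1$; combined with $\dim(\Vvs \cap V_{\nu_{m}}) \geq m$, the two subspaces cannot fit disjointly into $V_{\nu_{m}}$, which forces a nonzero vector in $\Vvs \cap W_{I^{c}}$ and hence $\mu_{I}(\Vvs)=0$.

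For the reverse direction I would proceed contrapositively: if $\Vvs \notin B_{t}^{k}(\Uvs_{0})$, the projection $\Vvs \to \F^{n-k}$ onto the last $n-k$ coordinates has rank $r \geq t+1$, so I can pick a basis of $\Vvs$ whose first $k-r$ vectors span $\Vvs \cap \Uvs_{0}$ (contributing a $(k-r) \times k$ full-rank matrix $A$) and whose remaining $r$ vectors project to unit vectors $e_{p_{1}},\ldots,e_{p_{r}}$ with $p_{1} < \cdots < p_{r}$ in $\{k+1,\ldots,n\}$. By Laplace expansion along the last $r$ rows, any Plücker coordinate $\mu_{(c_{1},\ldots,c_{k-r},p_{1},\ldots,p_{r})}$ with $c_{i} \in \{1,\ldots,k\}$ reduces to $\pm \det A[c_{1},\ldots,c_{k-r}]$; choosing $c_{i}$'s that make this minor nonzero (possible since $A$ has full row rank) yields an index tuple exceeding $\nu$ at position $k-r+1$, because $k-r+1 \leq k-t$ gives $\nu_{k-r+1} = k-r+t+1 \leq k < p_{1}$.

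The main subtlety is the reading of the "$\leq$" in the statement: for the claimed equality to hold it must be interpreted as the componentwise (Bruhat) partial order on $k$-subsets, not the lex order defined above the proposition, which is strictly coarser. For instance with $k=3$, $n=6$, $t=1$, $\nu=(2,3,6)$, the tuple $(1,4,5)$ is lex-below $\nu$, yet $\mathrm{span}(e_{1},e_{4},e_{5})$ has intersection dimension one with $V_{3}$ and so lies at distance two from $\Uvs_{0}$; the lex vanishing conditions alone therefore do not cut out the ball. Under the componentwise reading the argument above closes both directions and the proposition follows.
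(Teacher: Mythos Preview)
Your argument is correct and substantially more complete than the paper's. The paper's proof is a two-line sketch: it reformulates membership in the ball as $\dim(\Uvs_{0}\cap\Vvs)\geq k-t$ (as you do), then asserts that ``$k-t$ many of the unit vectors $e_{1},\ldots,e_{k}$ have to be elements of $\Vvs$'' and from this reads off the Pl\"ucker vanishing. That intermediate claim is not literally true---$\Vvs$ can intersect $\mathrm{span}(e_{1},\ldots,e_{k})$ in a $(k-t)$-dimensional subspace without containing a single standard basis vector---and no argument for the reverse inclusion is offered at all. Your route, via the identification with $S(\nu;\mathcal{F})$ for the standard flag and an explicit two-sided Pl\"ucker analysis (the dimension-count forward, the basis-construction and Laplace expansion backward), supplies what the paper omits.

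Your closing remark about the order is not a quibble but a genuine correction. The order defined just above the proposition is lexicographic, yet under that reading the stated equality fails: your example $(k,n,t)=(3,6,1)$, $\nu=(2,3,6)$, $\Vvs=\mathrm{span}(e_{1},e_{4},e_{5})$ has $\mu_{1,4,5}\neq 0$ as its only nonvanishing coordinate, satisfies all lex-vanishing conditions, but lies at injection distance $2$ from $\Uvs_{0}$. The proposition is correct only under the componentwise (Bruhat) order on $k$-subsets, which is what your proof actually uses; this is a slip in the paper, not in your proposal.
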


\begin{proof}
For $\Vvs$ to be inside the ball it has to hold that
\begin{align*}
d_{I}(\Uvs_{0}, \Vvs) &\leq t \\
\iff k - \dim(\Uvs_{0} \cap \Vvs) & \leq  t \\
\iff \dim(\Uvs_{0} \cap \Vvs) & \geq k-t
\end{align*}
i.e. $k-t$ many of the unit vectors $e_{1},..., e_{k}$ have to be
elements of $\Vvs$. 
Therefore $\varphi(\Vvs)$ has to fulfill
\[
\mu_{i_1,...,i_{k}} = 0  \textnormal{ if } 
(i_{1},...,i_{k}) \not \leq  (t+1,...,k,n-t+1,...,n) .
\]
 \end{proof}

 With the knowledge of $B^k_{t}(\Uvs_0)$ we can also express
 $B^k_{t}(\Uvs)$ for any $\Uvs \in \Gq $. For this note, that for any
 $\Uvs \in \Gq $ there exists an $A\in GL_n$ such that $\Uvs_0 A= \Uvs$.
 Moreover,
\[B^k_{t}(\Uvs_0 A) = B^k_{t}(\Uvs_0) A .\]

For simplifying the computations we define $\varphi$ 
on $GL_n$, where we denote by $A_{i_{1},\dots, i_k}$ the submatrix of $A$ that consists of the rows $i_{1}, \dots, i_{k}$:
\begin{align*}
\varphi : GL_n &\longrightarrow GL_{\binom{n}{k}} \\
 A & \longmapsto \left(\begin{array}{cccccc}
\det A_{1,\dots, k}[1, \dots, k] & \dots & \det A_{1,\dots, k}[n-k+1 ,\dots, n]\\
\vdots & & \vdots \\
\det A_{n-k+1,\dots, n}[1, \dots, k] & \dots & \det A_{n-k+1,\dots, n}[n-k+1 ,\dots, n]
                   \end{array}
 \right) 
\end{align*}


\begin{lemma}
Let $\Uvs \in \Gq $ and $A\in GL_{n}$. It holds that
\[\varphi(\Uvs A) = \varphi(\Uvs) \varphi(A).\]
\end{lemma}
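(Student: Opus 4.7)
The plan is to reduce the identity to the classical Cauchy--Binet formula for the minors of a matrix product. Pick any $k\times n$ matrix $U$ with $\rs(U)=\Uvs$ so that $\Uvs A$ is represented by the $k\times n$ product $UA$. Then every Pl\"ucker coordinate of $\Uvs A$ is, by definition, a $k\times k$ minor $\det((UA)[j_1,\dots,j_k])$, and my job is to show that this minor agrees with the corresponding entry of the matrix product $\varphi(\Uvs)\varphi(A)$.

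The central computation is the Cauchy--Binet expansion
\[
\det\bigl((UA)[j_1,\dots,j_k]\bigr)
=\sum_{1\leq i_1<\dots<i_k\leq n}
\det\bigl(U[i_1,\dots,i_k]\bigr)\,
\det\bigl(A_{i_1,\dots,i_k}[j_1,\dots,j_k]\bigr).
\]
I would first verify that the summand $\det(U[i_1,\dots,i_k])$ is exactly the $(i_1,\dots,i_k)$-th Pl\"ucker coordinate of $\Uvs$ (this is just the definition recalled in the Pl\"ucker embedding paragraph), and that $\det(A_{i_1,\dots,i_k}[j_1,\dots,j_k])$ is precisely the $\bigl((i_1,\dots,i_k),(j_1,\dots,j_k)\bigr)$-entry of the matrix $\varphi(A)$ as defined just above the lemma. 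Once these two matchings are in place, the Cauchy--Binet identity is \emph{literally} the formula for the $(j_1,\dots,j_k)$-entry of the row vector $\varphi(\Uvs)$ times the matrix $\varphi(A)$, and the lemma follows for each Pl\"ucker coordinate simultaneously.

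A small preliminary check is that the claim is well posed: different choices of representing matrix $U$ give the same $\varphi(\Uvs)$ only up to a common scalar (the determinant of the change of basis $T\in GL_k$), and this scalar factors out identically on both sides, so working with any chosen $U$ is harmless. The main --- really only --- obstacle is bookkeeping: the two distinct bracket conventions in the paper ($U[\cdot]$ selects columns, while $A_{\cdot}[\cdot]$ selects rows and then columns) have to be lined up so that the row index set of the submatrix of $A$ appearing in Cauchy--Binet matches the column index set of the submatrix of $U$. Once that indexing is carefully written out, no further calculation is required beyond invoking Cauchy--Binet, and the proof is complete.
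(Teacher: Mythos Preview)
Your argument is correct: the identity is precisely the Cauchy--Binet formula, once one matches the column-selection convention for $U$ with the row-then-column convention in the definition of $\varphi(A)$, and your remark about well-definedness up to a global scalar in $\mathbb{P}^{\binom{n}{k}-1}$ is the right way to handle the ambiguity in the representing matrix $U$.

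There is nothing to compare against, however: the paper states this lemma without proof. Your Cauchy--Binet derivation is the standard one (this is, after all, just the statement that $\varphi$ on $GL_n$ is the $k$-th exterior power representation $\Lambda^k$, and that the Pl\"ucker embedding is $GL_n$-equivariant for that action), so it is exactly what a reader would be expected to supply.
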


\begin{theorem}\label{thm5}
Let $\Uvs= \Uvs_{0}A \in \Gq $. Then
\begin{multline*}
B^k_{t}(\Uvs) =B^k_{t}(\Uvs_{0} A) \\
=\{\Vvs  \in \Gq  \mid \varphi(\Vvs) \varphi(A^{-1})
 = [\mu_{0,\dots,k-1}: 
\dots : \mu_{n-k+1,\dots,n}],\\
 \mu_{i_{1},\dots,i_{k}} = 0 
\;\forall (i_{1},\dots,i_{k}) \not \leq  (t+1,\dots,k,n-t+1,...,n)\}.
\end{multline*}
\end{theorem}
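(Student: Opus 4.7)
The plan is to chain together the two results that immediately precede the Theorem (the Proposition computing $B^k_t(\Uvs_0)$ in Pl\"ucker coordinates and the Lemma establishing $\varphi(\Vvs A) = \varphi(\Vvs)\varphi(A)$) with the displayed identity $B^k_{t}(\Uvs_{0} A) = B^k_{t}(\Uvs_{0}) A$ recorded in the paragraph above the Lemma. The theorem is essentially the synthesis of these three ingredients, so no new conceptual input is required.

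First I would justify, or simply invoke, the equality $B^k_{t}(\Uvs_{0} A) = B^k_{t}(\Uvs_{0}) A$. This is the geometric core of the argument: right multiplication by $A\in GL_n$ is a bijection of $\Gq$ that preserves $\dim(\Vvs)$ and $\dim(\Vvs\cap \Uvs)$ for any pair of subspaces, hence preserves the injection distance $d_I$; in particular it maps the ball around $\Uvs_0$ to the ball around $\Uvs_0 A$. From this one obtains the equivalence $\Vvs \in B^k_{t}(\Uvs_0 A)$ if and only if $\Vvs A^{-1} \in B^k_{t}(\Uvs_0)$.

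Second I would feed $\Vvs A^{-1}$ into the Proposition. That gives the Pl\"ucker-coordinate characterization: $\Vvs A^{-1}\in B^k_{t}(\Uvs_0)$ iff the Pl\"ucker coordinates $\varphi(\Vvs A^{-1})$ have entries $\mu_{i_1,\dots,i_k} = 0$ for every multi-index $(i_1,\dots,i_k) \not\leq (t+1,\dots,k,n-t+1,\dots,n)$ in the monomial order introduced just before the Proposition. Finally I would apply the Lemma to rewrite $\varphi(\Vvs A^{-1}) = \varphi(\Vvs)\varphi(A^{-1})$, and substituting this into the previous vanishing conditions yields precisely the description asserted by the Theorem.

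The main obstacle is purely notational: one must be careful that the indexing conventions of the matrix $\varphi(A^{-1})$ (its rows and columns are labelled by $k$-subsets of $\{1,\dots,n\}$) are matched to the monomial order so that the entries of the row vector $\varphi(\Vvs)\varphi(A^{-1})$ are indeed the $\mu_{i_1,\dots,i_k}$ appearing in the vanishing conditions. No combinatorial or algebraic difficulty arises beyond this bookkeeping, since the functoriality $\varphi(\Vvs A^{-1}) = \varphi(\Vvs)\varphi(A^{-1})$ (a form of the Cauchy--Binet identity) does all the real work of transporting the Schubert-type defining equations from the standard ball $B^k_t(\Uvs_0)$ to the translated ball $B^k_t(\Uvs)$.
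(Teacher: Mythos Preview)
Your proposal is correct and matches exactly the route the paper sets up: the paper does not give an explicit proof of the Theorem, treating it as an immediate consequence of the Proposition on $B^k_t(\Uvs_0)$, the displayed identity $B^k_{t}(\Uvs_{0} A) = B^k_{t}(\Uvs_{0}) A$, and the Lemma $\varphi(\Uvs A)=\varphi(\Uvs)\varphi(A)$. Your write-up simply makes this synthesis explicit, so there is nothing to add or correct.
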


There are always several choices for $A\in GL_{n}$ such that
$\Uvs_{0}A=\Uvs$. Since $GL_{{\binom{n}{k}}}$ is very large we try to
choose $A$ as simple as possible. We will now explain one such
construction:
\begin{enumerate}
\item The first $k$ rows of $A$ are equal to the matrix representation
  $U$ of $\Uvs$.
\item Find the pivot columns of $U$ (assume that $U$ is in reduced row echelon form).
\item Fill up the respective columns of $A$ with zeros in the lower $n-k$ rows.
\item Fill up the remaining submatrix of size $n-k\times n-k$ with an identity matrix.
\end{enumerate}
Then the inverse of $A$ can be computed as follows:
\begin{enumerate}
\item Find a permutation $\sigma \in S_{n}$ that permutes the columns
  of $A$ such that
\[\sigma(A)= \left(\begin{array}{ccc} I_k & U'' \\ 0 & I_{n-k} \end{array}\right) .\]
\item Then the inverse of that matrix is
\[\sigma(A)^{-1}= \left(\begin{array}{ccc} I_k & -U'' \\ 0 & I_{n-k} \end{array}\right).\]
\item Apply $\sigma$ on the rows of $\sigma(A)^{-1}$. The result is
  $A^{-1}$.  One can easily see this if one represents $\sigma$ by a
  matrix $S$. Then one gets $(SA)^{-1}S=A^{-1}S^{-1}S=A^{-1}$.
\end{enumerate}

\begin{example}
In $\mathcal{G}_{2}(2,4)$ we want to find 
\[B_{1}^{2}\left(\Uvs \right) = \{\Vvs  
\in \mathcal{G}_{2}(2,4) \mid\Vvs\cap \Uvs=1\}\]
for 
\[\Uvs =\rs (U)= \rs \left[
\begin{array}{cccc} 1&0&0&0\\ 0&0&1&1 \end{array}\right] .
\]
We find the pivot columns $U[1,3]$ and build 
\[A= \left(\begin{array}{cccc} 1 & 0&0&0 \\ 
0&0&1&1 \\ 0&1&0&0 \\ 0&0&0&1 \end{array}\right).
\]
Then we find the column permutation $\sigma=(23)$ such that
\[\sigma(A)= \left(\begin{array}{cccc} 1 & 0&0&0 \\ 
0&1&0&1 \\ 0&0&1&0 \\ 0&0&0&1 \end{array}\right).
\]
Now we can easily invert as described above and see that 
$\sigma(A)^{-1}=\sigma(A)$. We apply $\sigma$ on the rows and get
\[
A^{-1}= \left(\begin{array}{cccc} 1 & 0&0&0 \\ 
0&0&1&0 \\0&1&0&1 \\  0&0&0&1 \end{array}\right).
\]
Then 
\[
\varphi(A^{-1})= \left(\begin{array}{cccccc}
0& 1&0 & 0&0&0 \\ 1&0&1&0&0&0 \\0&0&1&0&0&0 
\\  0&0&0&1&0&1\\ 0&0&0&0&0&1 \\ 0&0&0&0&1&0 
\end{array}\right).
\]
From Theorem \ref{thm5} we know that
\begin{multline*}
B^2_{1}(\Uvs) = \{\Vvs  \in \mathcal{G}_{2}(2,4) 
\mid \varphi(\Vvs) \varphi(A^{-1}) = [\mu_{1,2}:\dots:\mu_{3,4}],\\
 \mu_{i_1,i_{2}} = 0 
 \;\forall (i_{1},i_{2}) \not \leq  (2,4)  \}\\
=\{\Vvs  \in \mathcal{G}_{2}(2,4) \mid \varphi(\Vvs) \varphi(A^{-1})
 = [\mu_{1,2}:\mu_{1,3}: \dots : \mu_{3,4}], \mu_{3,4} = 0  \}
\end{multline*}
Now let $\varphi(\Vvs)  = [\nu_{1,2}:\nu_{1,3}: 
\nu_{1,4}: \nu_{2,3}: \nu_{2,4}: \nu_{3,4}]$, then 
\[\varphi(\Vvs) \varphi(A^{-1}) = [\nu_{1,3}:\nu_{1,2}: 
\nu_{1,3}+\nu_{1,4}: \nu_{2,3}: \nu_{3,4}: \nu_{2,3}+\nu_{2,4}]
\]
and hence
\begin{multline*}
B^2_{1}(\Uvs)\\
 =\{\Vvs  \in \mathcal{G}_{2}(2,4) \mid \varphi(\Vvs)
 = [\nu_{1,2}:\nu_{1,3}: \nu_{1,4}: \nu_{2,3}: \nu_{2,4}: 
\nu_{3,4}], \nu_{2,3}+\nu_{2,4} = 0  \}\\
=\{\Vvs  \in \mathcal{G}_{2}(2,4) \mid \varphi(\Vvs)  
=[\nu_{1,2}:\nu_{1,3}: \nu_{1,4}: \nu_{2,3}: \nu_{2,4}: 
\nu_{3,4}], \nu_{2,3}=\nu_{2,4}  \}.
\end{multline*}
Note, that we do not have to compute the whole matrix
$\varphi(A^{-1})$ since in this case we only need the last column of
it to find the equations that define $B^2_{1}(\Uvs)$.
\end{example}

\section{Conclusion}\label{conclusion}

The article explains the importance of Schubert calculus in various
areas of systems theory and linear algebra. The strongest results 
in Schubert calculus require that the base field is algebraically closed.
The problem of list decoding subspace codes is a 
problem of Schubert calculus where the underlying field is a finite field.
It will be a topic of future research to come up with efficient algorithms
to tackle this problem computationally.

Many of the results we describe in this paper were derived by the 
first author in collaboration with Uwe Helmke. This collaboration was 
always very stimulating and the first author would like to thank Uwe Helmke
for this continuing collaboration.

\section*{Acknowledgments}

Research partially supported by Swiss National Science Foundation
Project no. 138080.

}

\nocite{he95,ro12,he06,gu04}
\bibliographystyle{abbrv}
\bibliography{/home/b/trautman/huge}

\end{document}